\newtheorem{theorem}{Theorem}
\newtheorem{definition}{Definition}
\newtheorem{proposition}{Proposition}
\newtheorem{example}{Example}
\newtheorem{remark}{Remark}
\DeclarePairedDelimiter{\norm}{\lVert}{\rVert} 
\def\BibTeX{{\rm B\kern-.05em{\sc i\kern-.025em b}\kern-.08em
		T\kern-.1667em\lower.7ex\hbox{E}\kern-.125emX}}
\DeclareMathAlphabet{\mathpzc}{OT1}{pzc}{m}{it}
\let\NAT@parse\undefined
\title{\LARGE \bf
	Data-Based System Analysis and Control of Flat Nonlinear Systems
}
\author{Mohammad Alsalti$^{1}$, Julian Berberich$^{2}$, Victor G. Lopez$^{1}$, Frank Allgöwer$^{2}$, and Matthias A. Müller$^{1}$ %
	\thanks{$^{1}$Leibniz University Hannover, Institute of Automatic Control, 30167 Hannover, Germany. E-mail:\{\href{maitlo:alsalti@irt.uni-hannover.de}{alsalti}, \href{maitlo:mueller@irt.uni-hannover.de}{mueller}, \href{maitlo:lopez@irt.uni-hannover.de}{lopez}\}@irt.uni-hannover.de}%
	\thanks{$^{2}$University of Stuttgart, Institute for Systems Theory and Automatic Control, 70550 Stuttgart, Germany. E-mail: \{\href{maitlo:julian.berberich@ist.uni-stuttgart.de}{julian.berberich}, \href{maitlo:frank.allgower@ist.uni-stuttgart.de}{frank.allgower}\}@ist.uni-stuttgart.de}%
	\thanks{This work has received funding from the European Research Council (ERC) under the European Union’s
		Horizon 2020 research and innovation programme (grant agreement No 948679). This work was also funded by Deutsche Forschungsgemeinschaft (DFG, German Research Foundation) under Germany’s Excellence Strategy - EXC 2075 - 390740016. The authors thank the International Max Planck Research School for Intelligent Systems (IMPRS-IS) for supporting Julian Berberich.}
}
\newcommand\copyrighttext{%
	\footnotesize \copyright 2021 IEEE. Personal use of this material is permitted. Permission from IEEE must be obtained for all other uses, in any current or future media, including reprinting/republishing this material for advertising or promotional purposes, creating new collective works, for resale or redistribution to servers or lists, or reuse of any copyrighted component of this work in other works.}
\newcommand\copyrightnotice{%
	\begin{tikzpicture}[remember picture,overlay]
		\node[anchor=south,yshift=10pt] at (current page.south) {\fbox{\parbox{\dimexpr\textwidth-\fboxsep-\fboxrule\relax}{\copyrighttext}}};
	\end{tikzpicture}%
}
\begin{document}
	\maketitle
	\thispagestyle{empty}
	\pagestyle{empty}
	\copyrightnotice	
	\begin{abstract}
		Willems et al. showed that all input-output trajectories of a discrete-time linear time-invariant system can be obtained using linear combinations of time shifts of a single, persistently exciting, input-output trajectory of that system. In this paper, we extend this result to the class of discrete-time single-input single-output flat nonlinear systems. We propose a data-based parameterization of all trajectories using only input-output data. Further, we use this parameterization to solve the data-based simulation and output-matching control problems for the unknown system without explicitly identifying a model. Finally, we illustrate the main results with numerical examples.
	\end{abstract}
	\section{Introduction}
	Modeling complex systems from first principles can in general be a difficult task. This is why using data to identify a mathematical model of the system has been studied for decades \cite{Ljung87}. Typically, one would identify a model from data, and later use it to design controllers and analyze its stability and robustness properties. Accurate models are often difficult to obtain and may lead to complex controller structures. Conversely, inaccurate models (due to unmodeled dynamics or the use of noisy data in the identification process) may degrade robustness and stability of the control system \cite{Zhong13}. In recent years, there has been an increasing interest in analyzing systems and designing controllers from process data directly, without explicitly identifying a model of the system. Examples of purely data-driven analysis and control techniques are iterative learning control \cite{Bristow06}, virtual reference feedback tuning \cite{Campi02}, unfalsified control theory \cite{Safonov}, among many others \cite{Zhong13,Tanaskovic17, Bazanella12}.\par It was shown in \cite{Willems05} that for the class of discrete time (DT) linear time-invariant (LTI) systems, \textit{all} input-output trajectories of a system can be parameterized using linear combinations of time-shifts of a single, \textit{persistently exciting}, input-output trajectory of that system. This result is now commonly referred to as the \textit{fundamental lemma}, and was applied to solve problems such as data-based simulation and control \cite{Markovsky08}. While originally being developed in the behavioral setting \cite{Willems05}, this result has recently been reformulated in the standard state-space framework \cite{Berberich20, vanWaarde20}. In \cite{Persis20,Berberich202}, the fundamental lemma was used to design (robust) state feedback and LQR controllers from data. Moreover, purely data-driven MPC schemes have been developed in \cite{Coulson20} and were analyzed for stability and robustness guarantees in \cite{Berberich203}.\par For nonlinear systems, the fundamental lemma was extended to the class of Hammerstein-Wiener systems in \cite{Berberich20} and second order discrete Volterra systems in \cite{Escobedo20}. Stabilization of single-input single-output (SISO) feedback linearizable systems using approximate models appeared in \cite{Tabuada20}. There, input-affine continuous-time systems were addressed assuming constant intersampling behavior of the states under high enough sampling rate. Other data-driven stabilization techniques for some classes of nonlinear systems appeared in \cite{Guo20, Strasser20}. In this paper, we consider the class of flat nonlinear systems \cite{Fliess92}, also referred to as difference flat nonlinear systems in discrete time \cite{Guillot19}. Flat systems are an important class of nonlinear systems that describe many practical applications such as robotic arms, crane control, and automatic flight control \cite{Levine09}.\par
	The contributions of this paper are the following. In Section III, we extend Willems' fundamental lemma to difference flat nonlinear systems,
	by leveraging the fact that a SISO flat system is fully feedback linearizable and by using a set of basis functions that describe the system's nonlinearities. In Section IV, we use this result to solve the data-based simulation and the output-matching control problems for flat nonlinear systems in a data-based fashion, without explicitly identifying a model. Finally, we illustrate the main results using numerical examples in Section V.
	\section{Preliminaries}
	For a sequence $\{z_k\}_{k=0}^{N-1}$ with $z_k\in\mathbb{R}^\sigma$, we use the following notation to denote the stacked sequence $z = \begin{bmatrix}z_0^\top &z_1^\top & \dots & z_{N-1}^\top\end{bmatrix}^\top$. Further, $z_{[l,j]}$ will be used to denote a stacked window of that sequence, i.e., $z_{[l,j]} = \begin{bmatrix}z_l^\top &z_{l+1}^\top & \dots & z_{j}^\top\end{bmatrix}^\top$ with $0\leq l<j\in\mathbb{N}$. We define the Hankel matrix and its block rows as
	\begin{equation*}
		\hspace{-1mm}H_L(z)\hspace{-0.5mm}=\hspace{-1mm}\begin{bmatrix}\hspace{-0.5mm}z_0 & z_1 & \dots & z_{N-L}\\
			\hspace{-0.5mm}z_1 & z_2 & \dots & z_{N-L+1}\hspace{-0.5mm}\\
			\vdots & \vdots & \ddots & \vdots\\
			\hspace{-0.5mm}z_{L-1} & z_L & \dots & z_{N-1}\end{bmatrix}\hspace{-1mm}=\hspace{-1mm}\begin{bmatrix}{h}_{0}(z_{[0,N-L]})\\{h}_{1}(z_{[1,N-L+1]})\\\vdots\\\hspace{-0.25mm}{h}_{L-1}(z_{[L-1,N-1]})\hspace{-0.25mm}\end{bmatrix}\hspace{-0.5mm}.
	\end{equation*}
	For the discussion in this paper, persistency of excitation (PE) is defined as follows.
	\begin{definition} The sequence \(\{z_k\}_{k=0}^{N-1}\) is said to be persistently exciting of order \(L\) if \(\textup{rank}(H_{L}(z))=\sigma L\).
	\end{definition}
	In the following, we recall the main result of \cite{Willems05} in the state-space framework (compare \cite{Berberich20, vanWaarde20}). Consider the following DT-LTI system of the form
	\begin{equation}
		\begin{aligned}
			x_{k+1}=Ax_k+Bu_k,\\
			y_k=Cx_k+Du_k,
		\end{aligned}
		\label{DTLTI}
	\end{equation}
	where \(x_k\in\mathbb{R}^n\) is the state, \(u_k\in\mathbb{R}^m\) is the input, \(y_k\in\mathbb{R}^p\) is the output and the pair \((A,B)\) is controllable. The fundamental lemma is stated as follows.\par
	\begin{theorem}[\hspace{-0.25mm}\cite{Willems05}, Theorem 1]
		Let \(\{u_k,y_k\}_{k=0}^{N-1}\) be an input-output trajectory of \eqref{DTLTI}. If \(\{u_k\}_{k=0}^{N-1}\) is persistently exciting of order \((L+n)\), then any \(\{\bar{u}_k,\bar{y}_k\}_{k=0}^{L-1}\) is a trajectory of (1), if and only if there exists \(\alpha\in\mathbb{R}^{N-L+1}\) such that
		\begin{equation}
			\begin{bmatrix} H_L(u)\\ H_L(y)\end{bmatrix} \alpha = \begin{bmatrix}\bar{u} \\ \bar{y}\end{bmatrix}.
			\label{fundamental_lemma}
		\end{equation}
	\end{theorem}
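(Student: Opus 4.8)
The plan is to follow the standard state-space route (cf.\ \cite{Berberich20, vanWaarde20}): the implication from \eqref{fundamental_lemma} to ``trajectory'' is elementary, and the converse reduces to a rank statement about the data. For the ``if'' direction I would use only linearity and time-invariance of \eqref{DTLTI}: each column of $\begin{bmatrix} H_L(u)\\ H_L(y)\end{bmatrix}$ equals $\begin{bmatrix} u_{[j,j+L-1]}\\ y_{[j,j+L-1]}\end{bmatrix}$, a length-$L$ window of the given trajectory, which by time-invariance is itself a length-$L$ trajectory of \eqref{DTLTI}; since the set of such trajectories is a linear subspace of $\mathbb{R}^{(m+p)L}$, the combination $\begin{bmatrix} H_L(u)\\ H_L(y)\end{bmatrix}\alpha$ is again a trajectory, so \eqref{fundamental_lemma} is sufficient.

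For the ``only if'' direction I would first pass to the underlying state sequence $\{x_k\}$, set $X=\begin{bmatrix}x_0 & \cdots & x_{N-L}\end{bmatrix}$, and note that with the $L$-step observability matrix $\Gamma$ and the block-Toeplitz matrix $T$ of Markov parameters one has, columnwise, $y_{[j,j+L-1]}=\Gamma x_j+Tu_{[j,j+L-1]}$, i.e.\ $H_L(y)=\Gamma X+TH_L(u)$. The key claim is then that $\begin{bmatrix}H_L(u)\\ X\end{bmatrix}$ has full row rank $mL+n$. Granting it, for a given $\{\bar u_k,\bar y_k\}_{k=0}^{L-1}$ I would choose an initial state $\bar x_0$ generating it, solve $\begin{bmatrix}H_L(u)\\ X\end{bmatrix}\alpha=\begin{bmatrix}\bar u\\ \bar x_0\end{bmatrix}$ (solvable by full row rank), and obtain $H_L(y)\alpha=\Gamma X\alpha+TH_L(u)\alpha=\Gamma\bar x_0+T\bar u=\bar y$, which is \eqref{fundamental_lemma}.

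The hard part --- and the only place where controllability of $(A,B)$ and persistency of excitation enter --- is the rank claim, which I would prove by contradiction. Suppose some nonzero $\begin{bmatrix}\mu^\top & \nu^\top\end{bmatrix}$, with $\mu=\begin{bmatrix}\mu_0^\top & \cdots & \mu_{L-1}^\top\end{bmatrix}^\top$, lies in the left kernel, i.e.\ $\nu^\top x_j=-\sum_{i=0}^{L-1}\mu_i^\top u_{j+i}$ for $j=0,\dots,N-L$. Writing $a(\zeta)=\det(\zeta I-A)=\sum_{k=0}^{n}a_k\zeta^k$ (so $a_n=1$) and $b(\zeta)=\nu^\top\operatorname{adj}(\zeta I-A)B=\sum_{k=0}^{n-1}b_k\zeta^k$, Cayley--Hamilton gives the recursion $\sum_{k=0}^{n}a_k(\nu^\top x_{l+k})=\sum_{k=0}^{n-1}b_ku_{l+k}$; substituting the kernel relation for each $\nu^\top x_{l+k}$ turns this, for $l=0,\dots,N-n-L$, into a single fixed linear combination of $u_l,\dots,u_{l+n+L-1}$ that vanishes, i.e.\ a left annihilator of $H_{n+L}(u)$. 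Persistency of excitation of order $n+L$ (Definition~1) forces that annihilator to be zero; inspecting its blocks from the highest $u$-index downward first gives $\mu_{L-1}=0$ (its coefficient being $a_n\mu_{L-1}=\mu_{L-1}$), then $\mu_{L-2}=0$, and inductively $\mu=0$, after which the remaining blocks yield $b(\zeta)\equiv0$, hence $\nu^\top A^kB=0$ for all $k$ and thus $\nu^\top\begin{bmatrix}B & AB & \cdots & A^{n-1}B\end{bmatrix}=0$; controllability then yields $\nu=0$, contradicting $\begin{bmatrix}\mu^\top & \nu^\top\end{bmatrix}\neq0$. I expect this last step --- matching the window length $n+L$ to the PE order and running the top-down elimination of the $\mu_i$ --- to be the only genuine obstacle; the rest is routine.
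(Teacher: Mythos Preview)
Your argument is correct and is precisely the standard state-space proof of the fundamental lemma (the route of \cite{vanWaarde20,Berberich20}): the ``if'' direction by linearity/time-invariance, the ``only if'' direction via the full-row-rank of $\begin{bmatrix}H_L(u)\\ X\end{bmatrix}$, and the rank claim by a Cayley--Hamilton/left-kernel elimination against $H_{L+n}(u)$. The top-down elimination of the $\mu_i$ and the subsequent conclusion $b(\zeta)\equiv 0\Rightarrow \nu^\top A^kB=0$ are sound.

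There is nothing to compare against, however: the paper does \emph{not} prove Theorem~1. It merely quotes it from \cite{Willems05} (in the state-space reformulation of \cite{Berberich20,vanWaarde20}) as a preliminary result and then uses it as a black box inside the proof of Proposition~1. So your write-up is not an alternative to the paper's proof but a self-contained proof of a result the paper takes for granted; as such it is fine and requires no changes.
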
\par
	In other words, Theorem 1 states that the vector space of all input-output trajectories of a DT-LTI system can be spanned by time shifts of previously collected, persistently exciting, input-output data. Persistency of excitation of order \((L+n)\) implies that knowledge of an upper bound on the system order is needed, and that the length of the a priori collected data $\{u_k,y_k\}_{k=0}^{N-1}$ is $N\geq(m+1)(L+n)-1$.
	\section{Data-Based Representation of Flat Nonlinear Systems}
	Differential flatness was described in continuous time in the seminal work by Fliess et al. \cite{Fliess92}. In DT, this concept is referred to as difference flatness and corresponds to a class of nonlinear systems that have an output (called flat output) of the same dimension as the input, such that all of the system's variables can be expressed as a function of that output and a finite number of its successive forward-time shifts \cite{Guillot19}. \par
	In this section, we extend the results of Theorem 1 to the class of DT-SISO flat nonlinear systems. In Section III.A, we give an overview of the DT relative degree and DT feedback linearization. Later, in Section III.B, we make use of the property that SISO flat nonlinear systems are fully feedback linearizable \cite[Theorem 1]{Diwold21} and propose a data-based parametrization of their trajectories.
	\subsection{Discrete-Time Feedback Linearization}
	Feedback linearization is a common nonlinear control technique which transforms the nonlinear system to a fully or partially linearized system \cite{Isidori95}. In particular, a feedback control law and a coordinate transformation linearize the input-output map and split the dynamics into a linear (external) and nonlinear (internal) pair. If the internal dynamics are stable, then a control input can be synthesized to cancel the nonlinearities, thereby allowing for application of linear control design techniques.\par
	We now recall the notion of the DT relative degree and DT feedback linearization. Consider the following DT-SISO nonlinear system
	\begin{equation}
		\begin{aligned}
			x_{k+1} &= f(x_k,u_k),\\
			y_k &= h(x_k),
		\end{aligned}\label{NLsys}
	\end{equation}
	with \(x_k\in\mathbb{R}^n,\, u_k\in\mathbb{R},\,y_k\in\mathbb{R}\) and \(f: \mathbb{R}^n\times\mathbb{R}\to\mathbb{R}^n,\,h:\mathbb{R}^n\to\mathbb{R}\) being smooth functions with $f(0,0)=0$ and $h(0)=0$. We define \(f_O^j\) to be the \(j^{\text{th}}\) iterated composition of the undriven dynamics \(f(\cdot,0)\). The system \eqref{NLsys} is said to have a (globally) well-defined relative degree \(d\) 
	\cite{Monaco87, Monaco92}, if
	\begin{equation}
		\begin{aligned}
			\frac{\partial\left(h\left(f_O^k\left(f(x,u)\right)\right)\right)}{\partial u}&=0, \quad \text{ for }0\leq k<d-1,\\
			\frac{\partial\left(h\left(f_O^{d-1}\left(f(x,u)\right)\right)\right)}{\partial u}&\neq 0, \quad \forall x\in \mathbb{R}^n,\, \forall u\in \mathbb{R},
			\label{rel_deg}
		\end{aligned}
	\end{equation}
	holds. In addition, we assume throughout the paper that
	\begin{equation}
		0\in\text{ Im}\left(h(f_O^{d-1}(f(x,\cdot)))\right),\quad \forall x\in \mathbb{R}^n.
		\label{img}
	\end{equation}
	Condition \eqref{img} means that $0$ lies in the image of $h(f_O^{d-1}(f(x,\cdot)))$ and is needed to transform the system to the normal form in \eqref{BINF} below (compare \cite{Monaco87}).\par 
	\begin{remark}
		If the relative degree is only well-defined locally around an equilibrium point ($x_e,u_e$), then the results of Theorem 2 hold locally in a neighborhood of that point. The main results of this paper (Prop. 1-3) remain true for all input-output trajectories of system \eqref{NLsys} in the corresponding neighborhoods where Theorem 2 holds.
	\end{remark}
	\indent It follows from \eqref{rel_deg} that
	\begin{equation}
		\begin{aligned}
			y_{k+i} &= h(f_O^i(x_k)), \quad 0\leq i\leq d-1,\\
			y_{k+d} &= h(f_O^{d-1}(f(x_k,u_k))).
			\label{y_def}
		\end{aligned}
	\end{equation}
	This means that either the relative degree is finite ($d\leq n$) and \(y_{d}\) is the first output to be affected by the initial input \(u_0\), or it is infinite and the output is never affected by the input \cite{Monaco87}. Using the above definition of the DT relative degree, feedback linearization of DT systems is given as follows.\par
	\begin{theorem}[\hspace{-0.25mm}\cite{Monaco87}, Proposition 2.1]
		Let \eqref{NLsys} have a globally well-defined relative degree \(d\leq n\). Then there exists a static nonsingular state feedback control law \(u=\gamma(x,v)\) and a change of coordinates \(z=\begin{bmatrix}\xi^\top&\eta^\top\end{bmatrix}^\top=\begin{bmatrix}T_1^\top(x) & T_2^\top(x)\end{bmatrix}^\top\) such that the system \eqref{NLsys} can be brought to the following form
		\begin{equation}
			\begin{aligned}
				\xi_{k+1} &= A_c \xi_k + B_cv_k,\\
				\eta_{k+1} &= F\left(z_k,v_k\right),\\
				y_k &= C_c \xi_k,
			\end{aligned}
			\label{BINF}
		\end{equation}
		with $\xi_k\in\mathbb{R}^d,\,\eta_k\in\mathbb{R}^{n-d},\,v_k\in\mathbb{R},\,\gamma : \mathbb{R}^n \times \mathbb{R} \to \mathbb{R},\,{\partial \gamma}/{\partial v}\neq0,\forall (x,v)\in\mathbb{R}^n\times\mathbb{R},\,T_1:\mathbb{R}^n\to\mathbb{R}^{d}$, $T_2:\mathbb{R}^n\to\mathbb{R}^{n-d}$, $F:\mathbb{R}^n\times\mathbb{R}\to\mathbb{R}^{n-d}$ and \(A_c,B_c,C_c\) in Brunovsky canonical form
		\footnote{The Brunovsky canonical form has the following structure \begin{equation*}A_c=\begin{bmatrix}0&1&\dots&0 \\ \vdots&\ddots&\ddots&\vdots \\ \vdots& &\ddots&1 \\ 0&\dots&\dots&0\end{bmatrix}, \quad B_c=\begin{bmatrix}0\\ \vdots\\ 0\\ 1\end{bmatrix},\quad C_c^\top=\begin{bmatrix}1\\ 0\\ \vdots\\ 0\end{bmatrix}\end{equation*}}
		\cite{Zeitz89}, which are a controllable/observable triplet.\end{theorem}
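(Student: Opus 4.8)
The plan is to construct the coordinate change $z=T(x)$ explicitly from the output $h$ and its forward shifts, and to read off the feedback law $\gamma$ directly from the relative-degree condition \eqref{rel_deg}. First I would set
\[
T_1(x) := \begin{bmatrix} h(x) & h(f_O(x)) & \cdots & h(f_O^{d-1}(x))\end{bmatrix}^\top ,
\]
so that, by \eqref{y_def}, the components of $\xi_k = T_1(x_k)$ are precisely $y_k,\dots,y_{k+d-1}$. Using the first line of \eqref{rel_deg}, for $1\le i\le d-1$ the $i$-th component of $\xi_{k+1}$ equals $h(f_O^{i-1}(f(x_k,u_k))) = h(f_O^{i-1}(f(x_k,0))) = h(f_O^{i}(x_k))$, i.e.\ it is independent of $u_k$ and equals the $(i+1)$-th component of $\xi_k$; this produces the integrator chain and fixes $A_c,B_c$ in Brunovsky form. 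The last component $\xi_{k+1,d} = h(f_O^{d-1}(f(x_k,u_k)))$ depends on $u_k$ with nowhere-vanishing partial derivative by the second line of \eqref{rel_deg}, so I define the new input $v_k := h(f_O^{d-1}(f(x_k,u_k)))$, which completes $\xi_{k+1}=A_c\xi_k+B_cv_k$; the output relation $y_k = h(x_k)=\xi_{k,1}=C_c\xi_k$ with $C_c=[1\ 0\ \cdots\ 0]$ is then immediate.

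Next I would invert the input map. For each fixed $x$, the map $u\mapsto h(f_O^{d-1}(f(x,u)))$ is smooth with nonzero derivative everywhere, hence strictly monotone; together with assumption \eqref{img}, which guarantees that the value $v=0$ lies in its range (so that the equilibrium is mapped to the origin and the normal form is valid around it), the implicit function theorem provides a smooth $\gamma$ with $v = h(f_O^{d-1}(f(x,\gamma(x,v))))$. Differentiating this identity in $v$ gives $\partial\gamma/\partial v = \big(\partial h(f_O^{d-1}(f(x,u)))/\partial u\big)^{-1}\neq 0$, so $\gamma$ is a nonsingular static state feedback. Substituting $u_k=\gamma(x_k,v_k)$ into \eqref{NLsys} then yields the decoupled external dynamics already derived.

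It remains to complete $T_1$ to a full change of coordinates and to extract the internal dynamics. One shows, from \eqref{rel_deg}, that $\partial T_1/\partial x$ has full row rank $d$, so that $T_1$ can be extended by a suitable $T_2:\mathbb{R}^n\to\mathbb{R}^{n-d}$ to a map $T=(T_1^\top,T_2^\top)^\top$ that is a diffeomorphism; writing $x_k=T^{-1}(z_k)$, the remaining states obey $\eta_{k+1}=T_2(f(x_k,u_k))=T_2\big(f(T^{-1}(z_k),\gamma(T^{-1}(z_k),v_k))\big)=:F(z_k,v_k)$, which is the second line of \eqref{BINF}, and the triplet $(A_c,B_c,C_c)$ is controllable/observable by the structure of the Brunovsky form. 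I expect the main obstacle to be exactly this full-rank/diffeomorphism step: in discrete time the linear independence of the covector fields $d(h\circ f_O^{i})$, $i=0,\dots,d-1$, does not follow as immediately from the relative-degree definition as the corresponding statement does in continuous time, and making the construction global (rather than merely local around an equilibrium, cf.\ Remark 1) is precisely where the globalness of the relative degree and condition \eqref{img} are needed.
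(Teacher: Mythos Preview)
The paper does not supply its own proof of this theorem: it is quoted verbatim as Proposition~2.1 of \cite{Monaco87} and is used as a black-box building block, so there is no in-paper argument to compare against. Your sketch is exactly the standard construction behind that cited result---define the first $d$ coordinates as successive forward shifts of the output, read off the Brunovsky chain from the first line of \eqref{rel_deg}, and invert the scalar map $u\mapsto h(f_O^{d-1}(f(x,u)))$ via the second line of \eqref{rel_deg} together with \eqref{img} to obtain $\gamma$. In that sense you are reproducing, not departing from, the approach the paper implicitly relies on through its citation.

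The one place where your outline is a genuine sketch rather than a proof is precisely the step you flag yourself: the claim that $\partial T_1/\partial x$ has full row rank $d$ (hence $T_1$ can be completed to a diffeomorphism) does \emph{not} follow directly from \eqref{rel_deg} as written. In discrete time the relative-degree conditions control the $u$-dependence of the composed maps but say nothing, by themselves, about the $x$-rank of the stacked differentials $d(h\circ f_O^i)$; the argument in \cite{Monaco87} needs an additional regularity/independence hypothesis on these covectors (or equivalently an observability-type condition on the pair $(f_O,h)$) to guarantee that $T$ is a local diffeomorphism, and the global version requires correspondingly stronger assumptions. So your proposal is correct in spirit and matches the literature, but if you were to turn it into a self-contained proof you would need to make that rank condition an explicit assumption or derive it from further structure, rather than asserting it follows from \eqref{rel_deg}.
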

	\indent In \eqref{BINF}, \(v_k\) is an external (synthetic) control input and $\eta_k$ are the internal dynamics, whereas $\xi_k$ are the external dynamics given by time shifts of the output
	\begin{equation}
		\begin{aligned}
			\xi_k&\vcentcolon=\begin{bmatrix}h(x_k)&h\left(f_O(x_k)\right)&\dots&h\left(f_O^{d-1}(x_k)\right)\end{bmatrix}^\top,\\
			&\stackrel{\eqref{y_def}}{=} \begin{bmatrix}y_k&y_{k+1}&\dots&y_{k+d-1}\end{bmatrix}^\top=y_{[k,k+d-1]}.
		\end{aligned}
		\label{xi}
	\end{equation} 
	\indent In general, the internal dynamics are not necessarily linear, posing additional challenges to the controller design in case they are unstable. However, a main advantage of SISO flat systems is that they are fully feedback linearizable \cite[Theorem 1]{Diwold21}, which implies that their relative degree is equal to the system dimension (i.e., $d=n$) and that there are no internal dynamics. Therefore, the system in \eqref{NLsys} can be be brought to the following form
	\begin{equation}
		\begin{aligned}
			\xi_{k+1} &= A_c \xi_k + B_cv_k,\\
			y_k &= C_c \xi_k.
		\end{aligned}\label{flat_NL}
	\end{equation}
	\indent We show in the next section how the system representation in \eqref{flat_NL} is used to extend Theorem 1 to SISO flat systems.
	\subsection{Extension of the Fundamental Lemma to Flat Systems}
	According to Theorem 1, the entire vector space of input-output trajectories of a DT-LTI controllable system can be spanned using time-shifts of a single, persistently exciting, input-output trajectory. The intuition here is that for \textit{any} system that is linear in suitably chosen coordinates and is controllable, the ideas of Theorem 1 apply. 
	By invertibility of $\gamma(x,v)$ w.r.t. $v$ as guaranteed by Theorem 2, one can write \(v_k=\Phi(u_k,T_1^{-1}(\xi_k))\) for some $\Phi:\mathbb{R}\times\mathbb{R}^n\to\mathbb{R}$. Furthermore, we assume that \(\Phi\) admits a basis function expansion of the form
	\begin{equation}
		\begin{aligned}
			v_k&=\Phi(u_k,T_1^{-1}(\xi_k)) = \sum\limits_{i=1}^{r} a_i{\psi}_i(u_k,\xi_k)\\
			&\eqqcolon \text{a}^\top \Psi(u_k,\xi_k)\stackrel{\eqref{xi}}{=}
			\text{a}^\top\Psi(u_k,y_{[k,k+n-1]})
			,\label{basis}
		\end{aligned}
	\end{equation}
	where \(\Psi(u_k,y_{[k,k+n-1]})\) is the stacked vector of $r\in\mathbb{N}$ linearly independent basis functions ${\psi}_i: \mathbb{R}\times\mathbb{R}^n\to\mathbb{R}$, and $\textup{a}^\top$ is the vector of coefficients \(a_i\) which are not all zero, for $i\in\mathbb{Z}_{[1,r]}$. For notational convenience, we define $\hat{\Psi}_k(u,y)\coloneqq \Psi(u_k,y_{[k,k+n-1]})$.
	Finally, by substituting \eqref{basis} into \eqref{flat_NL} we obtain
	\begin{equation}
		\begin{aligned}
			\xi_{k+1} &= A_c\xi_k+{B}_c\text{a}^\top\hat{\Psi}_k(u,y),\\
			y_k &= C_c\xi_k.
		\end{aligned}
		\label{flat_linear}
	\end{equation}
	The following proposition extends the result of Theorem 1 to SISO flat nonlinear systems.
	\begin{proposition} Suppose \(\{u_k\}_{k=0}^{N-n-1},\{y_k\}_{k=0}^{N-1}\) is a trajectory of a flat system as in \eqref{NLsys}, and that $\{\hat{\Psi}_k(u,y)\}_{k=0}^{N-n-1}$ from \eqref{flat_linear} is persistently exciting of order \(L\).
		Then, \(\{\bar{u}_k\}_{k=0}^{L-n-1},\{\bar{y}_k\}_{k=0}^{L-1}\) is a trajectory of the system \eqref{NLsys} if and only if there exists \(\alpha\in\mathbb{R}^{N-L+1}\) such that
		\begin{gather}
			\begin{bmatrix} H_{L-n}(\hat{\Psi}(u,y))\\ H_{L}(y)\end{bmatrix} \alpha = \begin{bmatrix}\hat{\Psi}(\bar{u},\bar{y})\\ \bar{y}\end{bmatrix},
			\label{flat_DB}
		\end{gather}
		with $\hat{\Psi}(u,y),\hat{\Psi}(\bar{u},\bar{y})$ being the stacked vectors of $\{\hat{\Psi}_k(u,y)\}_{k=0}^{N-n-1},\,\{\hat{\Psi}_k(\bar{u},\bar{y})\}_{k=0}^{L-n-1}$ respectively.
	\end{proposition}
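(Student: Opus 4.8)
My plan is to reduce the statement to the \emph{linear} representation \eqref{flat_linear}, on which a rank version of the fundamental lemma applies, while keeping track of an off-by-$n$ shift between the input $\hat\Psi$ and the output $y$. First I would show that $\{\bar u_k\}_{k=0}^{L-n-1},\{\bar y_k\}_{k=0}^{L-1}$ is a trajectory of \eqref{NLsys} if and only if $\bar y_{k+n}=\text{a}^\top\hat\Psi_k(\bar u,\bar y)$ for $k=0,\dots,L-n-1$, equivalently, if and only if $\{\bar\omega_k\}_{k=0}^{L-n-1}$ with $\bar\omega_k\coloneqq\hat\Psi_k(\bar u,\bar y)\in\mathbb{R}^r$, together with $\{\bar y_k\}_{k=0}^{L-1}$, is an input-output trajectory of the LTI system $\xi_{k+1}=A_c\xi_k+B_c\text{a}^\top\omega_k$, $y_k=C_c\xi_k$, i.e., \eqref{flat_linear} with $\omega_k$ regarded as input. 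This follows from Theorem 2 with $d=n$: the coordinate change $x\mapsto\xi=T_1(x)$ is a diffeomorphism with $\xi_k=y_{[k,k+n-1]}$ by \eqref{xi}, the feedback $v_k=\Phi(u_k,T_1^{-1}(\xi_k))$ is invertible in $u_k$, one has $v_k=\text{a}^\top\hat\Psi_k(u,y)$ by \eqref{basis}, and the Brunovsky structure of $(A_c,B_c,C_c)$ forces $v_k$ to equal the last entry of $\xi_{k+1}$, namely $y_{k+n}$. In particular, the measured data $\{u_k\}_{k=0}^{N-n-1},\{y_k\}_{k=0}^{N-1}$ yields a trajectory of \eqref{flat_linear} with input $\omega_k=\hat\Psi_k(u,y)$ and state $\xi_k=y_{[k,k+n-1]}$.

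The crux is then the rank identity $\textup{rank}\begin{bmatrix}H_{L-n}(\hat\Psi(u,y))\\ H_{L}(y)\end{bmatrix}=n+r(L-n)$, which I would establish in two moves. First, on the data, rows $0,\dots,n-1$ of $H_L(y)$ are exactly the state-data matrix $\begin{bmatrix}\xi_0&\cdots&\xi_{N-L}\end{bmatrix}$, while each of rows $n,\dots,L-1$ of $H_L(y)$ equals $\text{a}^\top$ times the corresponding block row of $H_{L-n}(\hat\Psi(u,y))$, because $y_{k+n}=\text{a}^\top\hat\Psi_k(u,y)$; hence the rank in question equals the rank of the matrix having block rows $\begin{bmatrix}\xi_0&\cdots&\xi_{N-L}\end{bmatrix}$ and $H_{L-n}(\hat\Psi(u,y))$. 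Second, $(A_c,B_c)$ is controllable and $\text{a}\neq0$, so $(A_c,B_c\text{a}^\top)$ is controllable; since $\{\hat\Psi_k(u,y)\}_{k=0}^{N-n-1}$ is persistently exciting of order $L=(L-n)+n$, the rank property underlying the proof of Theorem 1 (cf. \cite{Willems05,vanWaarde20}) shows that this matrix has full row rank $n+r(L-n)$.

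To finish, note that every column of $\begin{bmatrix}H_{L-n}(\hat\Psi(u,y))\\ H_{L}(y)\end{bmatrix}$ is, by construction, $\hat\Psi(\tilde u,\tilde y)$ stacked over $\tilde y$ for a length-$L$ window $(\tilde u,\tilde y)$ of the data, hence by the first step it lies in the input-output behavior of \eqref{flat_linear} over that horizon, which is a linear subspace of dimension $n+r(L-n)$ (with $\xi_0$ and $\omega_{[0,L-n-1]}$ free). By the rank identity, the column span of $\begin{bmatrix}H_{L-n}(\hat\Psi(u,y))\\ H_{L}(y)\end{bmatrix}$ equals this subspace. Combining with the first step: $\{\bar u_k\}_{k=0}^{L-n-1},\{\bar y_k\}_{k=0}^{L-1}$ is a trajectory of \eqref{NLsys} iff $\hat\Psi(\bar u,\bar y)$ stacked over $\bar y$ lies in this behavior, iff it lies in the column span, iff \eqref{flat_DB} admits a solution $\alpha$. (The ``if'' part also follows directly: the lower block of \eqref{flat_DB} gives $\bar y_{k+n}=\sum_i\alpha_i y_{k+n+i-1}$, the upper block gives $\hat\Psi_k(\bar u,\bar y)=\sum_i\alpha_i\hat\Psi_{k+i-1}(u,y)$, and left-multiplying the latter by $\text{a}^\top$ and using $y_{j+n}=\text{a}^\top\hat\Psi_j(u,y)$ on the data yields $\bar y_{k+n}=\text{a}^\top\hat\Psi_k(\bar u,\bar y)$.) The step I expect to be the main obstacle is the rank identity, and specifically the shift: since $\hat\Psi$ has $n$ fewer samples than $y$, Theorem 1 cannot be quoted verbatim; the resolution is that the top $n$ block rows of $H_L(y)$ are precisely the state-data matrix required by the fundamental-lemma rank argument.
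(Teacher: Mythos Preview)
Your proof is correct and rests on the same ingredients as the paper's: the equivalence between trajectories of \eqref{NLsys} and of the LTI system \eqref{flat_linear} via Theorem~2, controllability of $(A_c,B_c\text{a}^\top)$ from controllability of $(A_c,B_c)$ and $\text{a}\neq 0$, persistency of excitation of $\{\hat\Psi_k(u,y)\}$ of order $L=(L-n)+n$, and the Brunovsky identity $y_{k+n}=\text{a}^\top\hat\Psi_k(u,y)$.

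The organizational difference is this. The paper applies Theorem~1 \emph{verbatim} to \eqref{flat_linear} with input $\hat\Psi$ over horizon $L-n$, obtaining the ``short'' representation
\[
\begin{bmatrix} H_{L-n}(\hat\Psi(u,y))\\ H_{L-n}(y_{[0,N-n-1]})\end{bmatrix}\alpha=\begin{bmatrix}\hat\Psi(\bar u,\bar y)\\ \bar y_{[0,L-n-1]}\end{bmatrix},
\]
and then concatenates the missing $n$ output rows using $y_{k+n}=\text{a}^\top\hat\Psi_k(u,y)$ to reach \eqref{flat_DB}. You instead unpack the rank argument behind Theorem~1 and apply it directly to the ``long'' matrix $\begin{bmatrix}H_{L-n}(\hat\Psi(u,y))\\ H_L(y)\end{bmatrix}$, observing that rows $n,\dots,L-1$ of $H_L(y)$ are redundant (each is $\text{a}^\top$ times a block row of $H_{L-n}(\hat\Psi)$) while rows $0,\dots,n-1$ form the state-data matrix. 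Both routes handle the off-by-$n$ shift via the same identity; the paper's is shorter because it treats Theorem~1 as a black box, while yours makes explicit why the shifted Hankel structure still yields full row rank $n+r(L-n)$.
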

	\begin{proof}
		As seen from \eqref{y_def}, the effect of an input sequence $\{u_k\}_{k=0}^{N-n-1}$ can be observed from the corresponding output sequence $\{y_k\}_{k=0}^{N-1}$.
		According to Theorem 2, if \(\{\hat{\Psi}_k(\bar{u},\bar{y})\}_{k=0}^{L-n-1},\{\bar{y}_k\}_{k=0}^{L-1}\) is a trajectory of \eqref{flat_linear}, then \(\{\bar{u}_k\}_{k=0}^{L-n-1},\{\bar{y}_k\}_{k=0}^{L-1}\) is a trajectory of \eqref{NLsys}.\\
		\indent The pair $(A_c,B_c)$ is controllable by definition, i.e., $\begin{bmatrix}B_c & A_cB_c  & \dots & A_c^{n-1}B_c\end{bmatrix}$ has full rank. Then, $\begin{bmatrix}{B}_c\text{a}^\top & A_c{B}_c\text{a}^\top & \dots & A^{n-1}_c{B}_c\text{a}^\top\end{bmatrix}$ also has full rank since not all $a_i$ in \eqref{basis} are zero, implying that the pair $(A_c,{B}_c\text{a}^\top)$ is also controllable. Since $\{\hat{\Psi}_k(u,y)\}_{k=0}^{N-n-1}$ 
		is persistently exciting of order $L$ and $(A_c,{B}_c\text{a}^\top)$ is controllable, Theorem 1 implies that \(\{\hat{\Psi}_k(\bar{u},\bar{y})\}_{k=0}^{L-n-1},\{\bar{y}_k\}_{k=0}^{L-n-1}\) is a trajectory of \eqref{flat_linear} if and only if there exists \(\alpha\in\mathbb{R}^{N-L+1}\) such that
		\begin{gather}
			\begin{bmatrix} H_{L-n}(\hat{\Psi}(u,y))\\ H_{L-n}(y_{[0,N-n-1]})\end{bmatrix} \alpha = \begin{bmatrix}\hat{\Psi}(\bar{u},\bar{y})\\ \bar{y}_{[0,L-n-1]}\end{bmatrix}.
			\label{willems_flat}
		\end{gather}
		\indent Since \eqref{flat_NL} is in Brunovsky canonical form, the output can be written as 
		$y_{k+n}=v_{k}\stackrel{\eqref{basis}}{=}\text{a}^\top\hat{\Psi}_k(u,y)$. It follows that $h_{k+n}(y_{[k+n,k+n+N-L]}) = \text{a}^\top h_{k}(\hat{\Psi}_{[k,k+N-L]}(u,y))$ for  $k\in\mathbb{Z}_{[0,L-n-1]}$. From \eqref{willems_flat}, it holds that $h_i(\hat{\Psi}_{[i,i+N-L]}(u,y))\alpha = \hat{\Psi}_i(\bar{u},\bar{y}),\,\forall i\in\mathbb{Z}_{[L-2n,L-n-1]}$. Hence, $\bar{y}_{i+n} = \text{a}^\top\hat{\Psi}_i(\bar{u},\bar{y}) = h_{i+n}(y_{[i+n,i+n+N-L]})\alpha$. Concatenating these $n-$rows with \eqref{willems_flat} yields \eqref{flat_DB}.
	\end{proof}\par
	In order to directly apply Proposition 1, it is required that $v_k=\Phi(u_k,T_1^{-1}(\xi_k))$ can be parameterized by a finite number of known basis functions, for which the coefficients $a_i$, are unknown. This might not be true in practice, so one can choose sufficiently many basis functions to approximate the nonlinearity. However, Proposition 1 requires $\hat{\Psi}$ to be persistently exciting of order $L$ (i.e., rank\((H_{L}(\hat{\Psi}(u,y))) = rL\)). Therefore, choosing many basis functions makes the satisfaction of the PE condition more difficult. Specifically, the length of available data has to be $N\geq(r+1)L+n-1$. Moreover, the PE condition can in general only be checked after collecting both input-output data which is needed to construct $H_{L-n}(\hat{\Psi}(u,y))$, unlike the LTI case where the PE condition can be satisfied by a suitable choice of the input $u$. Since satisfying the PE condition can in general be a difficult task, 
	we show in Section IV how a kernel method can be exploited to implicitly use an infinite number of basis functions, on the expense of only approximating a subset of (rather than characterizing all) input-output trajectories.\par
	Theorem 1 has been widely used in the development of data-driven control methods for linear systems. Proposition 1 provides an extension of the fundamental lemma to the class of SISO flat nonlinear systems, by exploiting linearity properties in suitably chosen coordinates. In the following sections, we will show how this result can be employed for data-based simulation and control.
	\section{Data-Based Simulation and Control}
	\indent In this section, we show how the data-based representation in Proposition 1 allows us to both simulate and control trajectories of unknown flat nonlinear systems using only measured data. Compared to \cite{Markovsky08, Berberich20}, where these problems were considered for linear and (partially) for Hammerstein systems, the main challenge is the presence of the basis functions in \eqref{flat_DB} which depend on both the input $u_k$ and the state $\xi_k$ of the system \eqref{flat_linear}. This necessitates certain modifications and extensions as explained in the following propositions. We first start by addressing the data-based simulation problem where the goal is to compute an unknown system's response to a given input and initial conditions directly from data.
	\begin{proposition}
		Suppose \(\{u_k\}_{k=0}^{N-n-1},\{y_k\}_{k=0}^{N-1}\) is a trajectory of a flat system as in \eqref{NLsys}. Furthermore, assume that
		$\{\hat{\Psi}_k(u,y)\}_{k=0}^{N-n-1}$ from \eqref{flat_linear} is persistently exciting of order \(L\) and let \(\{\bar{u}_k\}_{k=0}^{L-n-1}\) be a new input to \eqref{NLsys} with \(\{\bar{y}_k\}_{k=0}^{n-1}\) specifying initial conditions for the state $\bar{\xi}_0$ in \eqref{flat_linear}. Then there exists \(\alpha\in\mathbb{R}^{N-L+1}\) such that
		\begin{equation}
			\begin{bmatrix}H_{L-n}(\hat{\Psi}(u,y)) \\ H_n(y_{[0,N-L+n-1]}) \end{bmatrix}\alpha=\begin{bmatrix}\hat{\Psi}(\bar{u},H_L(y)\alpha)\\\bar{y}_{[0,n-1]}\end{bmatrix}
			\label{sim}
		\end{equation}
		holds, where $\hat{\Psi}(\bar{u},H_L(y)\alpha)$ is the stacked vector of the sequence $\{\hat{\Psi}_k(\bar{u},H_L(y)\alpha)\}_{k=0}^{L-n-1}$ with $\hat{\Psi}_k(\bar{u},H_L(y)\alpha)=\Psi(\bar{u}_k,H_n(y_{[k,k+N-L+n-1]})\alpha)$, and $\hat{\Psi}(u,y)$ as in Prop. 1. It further holds that $\bar{y}=H_{L}(y)\alpha$.
	\end{proposition}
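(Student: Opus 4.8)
The plan is to deduce the statement from Proposition~1 applied to the (unique) trajectory that the new input and initial conditions generate for the flat system. Since \eqref{NLsys} is fully feedback linearizable, it can be written in the form \eqref{flat_linear}, where $\xi_k=y_{[k,k+n-1]}$ and the last component of $\xi_{k+1}$ is $y_{k+n}=\text{a}^\top\hat{\Psi}_k(u,y)=\text{a}^\top\Psi(u_k,y_{[k,k+n-1]})$. Hence, starting from $\bar{\xi}_0=\bar{y}_{[0,n-1]}$ and iterating $\bar{y}_{k+n}=\text{a}^\top\Psi(\bar{u}_k,\bar{y}_{[k,k+n-1]})$ for $k=0,\dots,L-n-1$ (which requires no inversion, only function evaluations) produces a well-defined sequence $\{\bar{y}_k\}_{k=0}^{L-1}$ such that $\{\bar{u}_k\}_{k=0}^{L-n-1},\{\bar{y}_k\}_{k=0}^{L-1}$ is a trajectory of \eqref{NLsys} with the prescribed initial conditions, and it is the only such trajectory.

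For the existence part I would invoke Proposition~1 for this trajectory. Persistency of excitation of $\{\hat{\Psi}_k(u,y)\}_{k=0}^{N-n-1}$ is assumed, so Proposition~1 yields $\alpha\in\mathbb{R}^{N-L+1}$ with $H_{L-n}(\hat{\Psi}(u,y))\alpha=\hat{\Psi}(\bar{u},\bar{y})$ and $H_L(y)\alpha=\bar{y}$; the latter is already the final assertion of the proposition. Reading off the first $n$ rows of $H_L(y)\alpha=\bar{y}$ gives $H_n(y_{[0,N-L+n-1]})\alpha=\bar{y}_{[0,n-1]}$, the second block row of \eqref{sim}. More generally, for $k=0,\dots,L-n-1$ the length-$n$ window satisfies $\bar{y}_{[k,k+n-1]}=H_n(y_{[k,k+N-L+n-1]})\alpha$, so that $\hat{\Psi}_k(\bar{u},\bar{y})=\Psi(\bar{u}_k,H_n(y_{[k,k+N-L+n-1]})\alpha)$, which is exactly $\hat{\Psi}_k(\bar{u},H_L(y)\alpha)$ in the notation of the proposition. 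Substituting this identity into $H_{L-n}(\hat{\Psi}(u,y))\alpha=\hat{\Psi}(\bar{u},\bar{y})$ reproduces the first block row of \eqref{sim}, and stacking the two blocks establishes existence of $\alpha$.

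Finally, I would argue that $\bar{y}=H_L(y)\alpha$ holds for \emph{every} $\alpha$ solving \eqref{sim}. Given such an $\alpha$, set $\tilde{y}\coloneqq H_L(y)\alpha$; then \eqref{sim} reads $H_{L-n}(\hat{\Psi}(u,y))\alpha=\hat{\Psi}(\bar{u},\tilde{y})$ together with $\tilde{y}_{[0,n-1]}=\bar{y}_{[0,n-1]}$, and combining this with the trivial identity $H_L(y)\alpha=\tilde{y}$ gives precisely \eqref{flat_DB} for the pair $(\bar{u},\tilde{y})$. By the ``if'' direction of Proposition~1, $(\bar{u},\tilde{y})$ is then a trajectory of \eqref{NLsys}; since it has the same initial condition $\bar{\xi}_0$ as the trajectory built in the first paragraph, uniqueness of the flat system's response forces $\tilde{y}=\bar{y}$. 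I expect the only delicate point to be the implicit occurrence of $\alpha$ on both sides of \eqref{sim}, which is handled here by never solving \eqref{sim} directly but routing the argument through Proposition~1 and uniqueness; the accompanying index bookkeeping that identifies the Hankel block rows of $y$ with the output windows entering $\hat{\Psi}$ also has to be carried out carefully.
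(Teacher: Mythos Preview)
Your proof is correct and follows essentially the same route as the paper's: invoke Proposition~1 for the trajectory generated by $(\bar{u},\bar{\xi}_0)$ to obtain $\alpha$ satisfying \eqref{flat_DB} with $H_L(y)\alpha=\bar{y}$, then substitute this identity into the argument of $\hat{\Psi}$ to arrive at \eqref{sim}. Your treatment is in fact more thorough---you explicitly construct the trajectory and, in the last paragraph, prove that \emph{every} solution of \eqref{sim} satisfies $\bar{y}=H_L(y)\alpha$ via uniqueness of the system's response---whereas the paper simply declares the substituted system ``equivalent'' without spelling out this converse direction.
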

	\begin{proof}
		Since $\{\bar{y}_k\}_{k=0}^{n-1}$ specifies the initial condition $\bar{\xi}_0$, it follows from Proposition 1 that there exists $\alpha\in\mathbb{R}^{N-L+1}$ such that
		\begin{equation}
			\begin{bmatrix}H_{L-n}(\hat{\Psi}(u,y)) \\ H_n(y_{[0,N-L+n-1]}) \end{bmatrix}\alpha=\begin{bmatrix}\hat{\Psi}(\bar{u},\bar{y})\\\bar{y}_{[0,n-1]}\end{bmatrix}
			\label{unsolvable}
		\end{equation}
		holds, where $\hat{\Psi}(\bar{u},\bar{y})$ is the stacked vector of the sequence $\{\hat{\Psi}_k(\bar{u},\bar{y})\}_{k=0}^{L-n-1}$.
		This is true since \eqref{unsolvable} is a subset of the equations of \eqref{flat_DB}. It further follows from Proposition 1 that $\bar{y}= H_{L}(y)\alpha$, where $\bar{y}$ is the stacked vector of the corresponding output sequence $\{\bar{y}_k\}_{k=0}^{L-1}$.
		Since $\bar{y}$ in \eqref{unsolvable} is unknown, we substitute $\bar{y}=H_{L}(y)\alpha$ and obtain the equivalent equation \eqref{sim}.
	\end{proof}\par
	\indent Since in practice, it is inevitable that collected data is noisy, we use a regularized least-squares optimization problem to solve \eqref{sim} for $\alpha$ (compare \cite{Berberich20})
	\begin{equation}
		\min\limits_{\alpha}\norm*{\hspace{-0.5mm}\begin{bmatrix}H_{L-n}(\hat{\Psi}(u,y)) \\ H_n(y_{[0,N-L+n-1]})\hspace{-0.75mm} \end{bmatrix}\hspace{-1mm}\alpha\hspace{-0.5mm}-\hspace{-1.5mm}\begin{bmatrix}\hat{\Psi}(\bar{u},H_L(y)\alpha)\\\bar{y}_{[0,n-1]}\end{bmatrix}\hspace{-0.5mm}}_2^2\hspace{-2mm}+\hspace{-0.5mm}\lambda\norm*{\alpha}_2^2.
		\label{sim_opt}
	\end{equation}
	\begin{remark}
		If the basis functions ${\psi}_i$ are affine in $\xi$, problem \eqref{sim_opt} reduces to a quadratic program. Otherwise, it is a nonlinear optimization problem.
	\end{remark}
	Notice that \eqref{sim_opt} does not explicitly depend on the choice of the basis functions, but rather on their scalar product. Therefore, one can (implicitly) use an infinite number of basis functions (i.e., $r=\infty$), and employ a kernel method \cite{Hofmann08} to approximate the scalar multiplication of the infinite stacked vectors of basis functions 
	found in \eqref{sim_opt} (see, e.g., \cite{Berberich20}). 
	For instance, we can use bivariate Gaussian kernel functions of the following form, for some $\sigma>0$ \begin{subequations}
		\begin{align}
			&\Psi(u_i,y_{[i,i+n-1]})^\top \Psi(u_j,y_{[j,j+n-1]})=\notag\\ &\quad\qquad e^{-\frac{\norm*{\small\begin{bmatrix}u_i\\y_{[i,i+n-1]}\end{bmatrix}-\begin{bmatrix}u_j\\y_{[j,j+n-1]}\end{bmatrix}}_2^2}{2\sigma^2}},\\
			&\Psi(\bar{u}_i,H_n(y_{[i,i+N-L+n-1]})\alpha)^\top\Psi(u_j,y_{[j,j+n-1]})=\label{kernelB}\notag\\ &\quad\qquad e^{-\frac{\norm*{\small\begin{bmatrix}\bar{u}_i\\H_n(y_{[i,i+N-L+n-1]})\alpha\end{bmatrix}-\begin{bmatrix}u_j\\y_{[j,j+n-1]}\end{bmatrix}}_2^2}{2\sigma^2}}.
		\end{align}
		\label{kernels}
	\end{subequations}
	\indent As in Proposition 2, we have replaced the unknown $\bar{y}_i$ by $h_{i}(y_{[i,i+N-L]})\alpha$ in \eqref{kernelB}, which results in \eqref{sim_opt} being a nonlinear optimization problem. It should be noted that using an infinite number of unknown basis functions would require an infinitely long persistently exciting input sequence (since rank\((H_{L}(\hat{\Psi}(u,y))) = rL\) is required), which is not attainable in practice. Hence, if kernels are used to compute the inner product in \eqref{sim_opt}, then typically only a subset of the input-output trajectory space can be approximated. In Section V, we illustrate this result with a numerical example.\par
	We now consider the data-based output matching problem for flat nonlinear systems. The goal here is to compute a control input that is required to make an unknown system track a given reference trajectory with given initial conditions.
	\begin{proposition}
		Suppose \(\{u_k\}_{k=0}^{N-n-1},\{y_k\}_{k=0}^{N-1}\) is a trajectory of a flat system as in \eqref{NLsys} and that
		$\{\hat{\Psi}_k(u,y)\}_{k=0}^{N-n-1}$
		from \eqref{flat_linear} is persistently exciting of order \(L\). Let \(\{\bar{y}_k\}_{k=0}^{L-1}\) be a desired reference trajectory with \(\{\bar{y}_k\}_{k=0}^{n-1}\) specifying initial conditions for the state $\bar{\xi}_0$ in \eqref{flat_linear}. Moreover, let one of the basis functions ${\psi}_i$ be the identity\footnote{It is sufficient that one of the basis functions is invertible w.r.t. u. For notational simplicity, we do not explicitly consider this case here.}, i.e., ${\psi}_i(u,\xi)=u$. Then there exists \(\alpha\in\mathbb{R}^{N-L+1}\) such that
		\begin{equation}
			\begin{bmatrix}H_{L-n}(\hat{\Psi}(u,y)) \\ H_L(y) \end{bmatrix}\alpha=\begin{bmatrix}\hat{\Psi}(H_{L-n}(u)\alpha,\bar{y})\\\bar{y}\end{bmatrix}
			\label{flat_OM_eq}
		\end{equation}
		holds, where $\hat{\Psi}(H_{L-n}(u)\alpha,\bar{y})$ is the stacked vector of the sequence $\{\hat{\Psi}_k(H_{L-n}(u)\alpha,\bar{y})\}_{k=0}^{L-n-1}$ with $\hat{\Psi}_k(H_{L-n}(u)\alpha,\bar{y})=\Psi(h_k(u_{[k,k+N-L]})\alpha,\bar{y}_{[k,k+n-1]})$ and $\hat{\Psi}(u,y)$ as in Prop. 1. It further holds that $\bar{u}=H_{L-n}(u)\alpha$.
	\end{proposition}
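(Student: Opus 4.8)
The plan is to follow the same route as in the proof of Proposition~2, but now eliminating the unknown \emph{input} $\bar u$ rather than the unknown output. The first step is a flatness argument: since a SISO flat system has relative degree $d=n$ and the output is a flat output with no internal dynamics \eqref{flat_linear}, the reference $\{\bar y_k\}_{k=0}^{L-1}$ (with $\{\bar y_k\}_{k=0}^{n-1}$ fixing $\bar\xi_0$ via \eqref{xi}) is the output of a trajectory $\{\bar u_k\}_{k=0}^{L-n-1},\{\bar y_k\}_{k=0}^{L-1}$ of \eqref{NLsys}. Indeed, from \eqref{flat_linear} one has $\bar y_{k+n}=\text{a}^\top\hat{\Psi}_k(\bar u,\bar y)$, and since the map $u\mapsto\text{a}^\top\Psi(u,\bar\xi_k)=\Phi(u,T_1^{-1}(\bar\xi_k))$ is invertible (by invertibility of $\gamma$ w.r.t.\ $v$ from Theorem~2, equivalently by the footnote's weaker hypothesis that one basis function is invertible w.r.t.\ $u$), the input $\bar u_k$ is uniquely determined from $\bar y_{[k,k+n]}$.

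Next I would apply Proposition~1 to this trajectory: there exists $\alpha\in\mathbb{R}^{N-L+1}$ such that
\begin{gather*}
\begin{bmatrix} H_{L-n}(\hat{\Psi}(u,y))\\ H_{L}(y)\end{bmatrix} \alpha = \begin{bmatrix}\hat{\Psi}(\bar{u},\bar{y})\\ \bar{y}\end{bmatrix}.
\end{gather*}
The key step is then to read off $\bar u=H_{L-n}(u)\alpha$. Because one basis function is the identity, $\psi_i(u,\xi)=u$, the $i$-th component of each block $\hat{\Psi}_k(u,y)=\Psi(u_k,y_{[k,k+n-1]})$ is $u_k$, so selecting the $i$-th row from each of the $L-n$ block-rows of $H_{L-n}(\hat{\Psi}(u,y))$ yields exactly $H_{L-n}(u)$; the matching components of the right-hand side give $\{\bar u_k\}_{k=0}^{L-n-1}$. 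Hence $h_k(u_{[k,k+N-L]})\alpha=\bar u_k$ for $k\in\mathbb{Z}_{[0,L-n-1]}$, i.e.\ $H_{L-n}(u)\alpha=\bar u$. Substituting this consistent expression for $\bar u$ into $\hat{\Psi}(\bar u,\bar y)$ turns its $k$-th entry into $\Psi(h_k(u_{[k,k+N-L]})\alpha,\bar y_{[k,k+n-1]})$, i.e.\ into $\hat{\Psi}(H_{L-n}(u)\alpha,\bar y)$, which is precisely \eqref{flat_OM_eq}; and $\bar u=H_{L-n}(u)\alpha$ is the claimed data-based reconstruction of the control input.

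I expect the difficulty to lie not in depth but in two points of care: (i) making the flatness step precise, namely that the prescribed reference is realizable as an output trajectory and that the associated $\bar u$ is well-defined via invertibility of $\Phi$ w.r.t.\ $u$ (only here does the identity/invertible-basis-function hypothesis enter the existence argument); and (ii) the index bookkeeping when identifying the identity-basis-function rows of the Hankel matrix $H_{L-n}(\hat{\Psi}(u,y))$ with $H_{L-n}(u)$ and the corresponding components of $\hat{\Psi}(\bar u,\bar y)$ with $\bar u$. As after Proposition~2, one should also remark that in practice, with noisy data, \eqref{flat_OM_eq} is solved via a regularized least-squares problem, which is a quadratic program if the remaining basis functions are affine in $\xi$ and a nonlinear program otherwise.
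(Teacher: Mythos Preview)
Your proof is correct and follows essentially the same route as the paper: argue via Theorem~2 (flatness, $\bar y_{k+n}=v_k$ and $\bar u_k=\gamma(T_1^{-1}(\bar\xi_k),\bar y_{k+n})$) that the reference $\bar y$ is realizable by some input $\bar u$, apply Proposition~1 to obtain \eqref{flat_DB}, extract $\bar u=H_{L-n}(u)\alpha$ from the identity-basis-function rows of $H_{L-n}(\hat\Psi(u,y))\alpha=\hat\Psi(\bar u,\bar y)$, and substitute. One small slip in your closing remark: the regularized problem \eqref{flat_opt_om} becomes a quadratic program when the basis functions are affine in $u$ (since here $\bar y$ is known and $\bar u$ enters linearly as $H_{L-n}(u)\alpha$), not when they are affine in $\xi$; also note that the identity-basis-function hypothesis is used only for the extraction step, while existence of $\bar u$ rests on invertibility of $\gamma$ from Theorem~2.
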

	\begin{proof}
		Since $\{\bar{y}_k\}_{k=0}^{n-1}$ specifies the initial condition $\bar{\xi}_0$, it can be seen that any reference trajectory $\{\bar{y}\}_{k=0}^{L-1}$ is a trajectory of \eqref{NLsys} given $\{\bar{y}\}_{k=0}^{n-1}$ by choosing a suitable input sequence. In particular, since $\bar{y}_{k+n}={v}_{k}$ holds for $k\in\mathbb{Z}_{[0,L-n-1]}$ as in \eqref{flat_NL}, then according to Theorem 2, the required input to achieve the desired output $\bar{y}$ is given by $\bar{u}_k=\gamma(T_1^{-1}(\bar{\xi}_k),\bar{y}_{k+n})$ for $k\in\mathbb{Z}_{[0,L-n-1]}$.
		Hence, it follows from Proposition 1 that there exists $\alpha\in\mathbb{R}^{N-L+1}$ such that \eqref{flat_DB}
		holds, where $\hat{\Psi}(\bar{u},\bar{y})$ is the stacked vector of the sequence $\{\hat{\Psi}_k(\bar{u},\bar{y})\}_{k=0}^{L-n-1}$.
		Since $\bar{u}$ is unknown, we substitute $\bar{u}=H_{L-n}(u)\alpha$. This can be done since one of the basis functions is the identity and hence $\hat{\Psi}(\bar{u},\bar{y})=H_{L-n}(\hat{\Psi}(u,y))\alpha$ implies $\bar{u}=H_{L-n}(u)\alpha$. By this substitution, we obtain the equivalent equation \eqref{flat_OM_eq}.
	\end{proof}\par
	\indent In order to apply Proposition 3 in case of noisy data, Equation \eqref{flat_OM_eq} is solved for $\alpha$ using a regularized least-squares optimization problem of the form
	\begin{equation}
		\hspace{-0.5mm}\min\limits_{\alpha}\;\norm*{\hspace{-0.5mm}\begin{bmatrix}H_{L-n}(\hat{\Psi}(u,y)) \\ H_L(y) \end{bmatrix}\hspace{-0.75mm}\alpha\hspace{-0.5mm}-\hspace{-1mm}\begin{bmatrix}\hat{\Psi}(H_{L-n}(u)\alpha,\bar{y})\\\bar{y}\end{bmatrix}\hspace{-0.5mm}}_2^2\hspace{-1.5mm} + \hspace{-0.5mm}\lambda\norm{\alpha}_2^2.
		\label{flat_opt_om}
	\end{equation}
	\begin{remark}
		If the basis functions ${\psi}_i$ are affine in the control input $u$, then problem \eqref{flat_opt_om} reduces to a quadratic program. Otherwise, it is a nonlinear optimzation problem.
	\end{remark}
	Similar to the discussion following Proposition 2, one can approximate the scalar product of the basis functions in \eqref{flat_opt_om} using a modified version of the kernels used in \eqref{kernels}. 
	\begin{align}
		&\Psi(u_i,y_{[i,i+n-1]})^\top \Psi(u_j,y_{[j,j+n-1]})=\notag\\&\quad\qquad u_iu_j+ e^{-\frac{\norm*{\small\begin{bmatrix}u_i\\y_{[i,i+n-1]}\end{bmatrix}-\begin{bmatrix}u_j\\y_{[j,j+n-1]}\end{bmatrix}}_2^2}{2\sigma^2}},\\
		&\Psi(h_i(u_{[i,i+N-L]})\alpha,\bar{y}_{[i,i+n-1]})^\top \Psi(u_j,y_{[j,j+n-1]})=\notag\\&\; (h_i(u_{[i,i+N-L]})\alpha) u_j+e^{-\frac{\norm*{\small\begin{bmatrix}h_i(u_{[i,i+N-L]})\alpha\\\bar{y}_{[i,i+n-1]}\end{bmatrix}-\begin{bmatrix}u_j\\y_{[j,j+n-1]}\end{bmatrix}}_2^2}{2\sigma^2}}.\notag
	\end{align}
	
	This modification allows for retrieving the input $\bar{u}$ by including ${\psi}_i(u,k)=u$ as one of the basis functions.\\
	\indent In this section, we outlined the procedure required to solve the data-driven simulation and output-matching control problems for SISO flat nonlinear systems. We emphasize again that in order to apply these results, only an upper bound on the system order is required as well as a set of basis functions ${\psi}_i$ such that \eqref{basis} is satisfied. If the latter is not available, one can instead use kernel methods to approximately solve these two problems. In the following section, we illustrate these results with numerical examples.
	\section{Numerical Examples}
	In this section, we consider two examples of flat nonlinear systems and apply the results of Propositions 2 and 3. Example 1 addresses the output matching control problem where the nonlinearity $v_k=\Phi(u_k,T_1^{-1}(\xi_k))$ in \eqref{basis} can be written as a finite expansion of known basis functions. Example 2 addresses the simulation problem for the more general case where the nonlinearity is only approximated using an infinite number of unknown basis functions.
	\begin{example}
		Consider the following flat system
		\begin{align*}
			x_1(k+1) &= x_2(k),\\
			x_2(k+1) &= u(k)\left(x_1^2(k)+2\right),\\
			y(k) &= x_1(k),
		\end{align*}
		which has a relative degree $d=2$. The system can be brought into \eqref{flat_NL} by choosing $\begin{bmatrix}\xi_1&\xi_2\end{bmatrix}^\top = T_1(x) = \begin{bmatrix}x_1&x_2\end{bmatrix}^\top$. We wish to solve the output matching control problem given a sinusoidal reference trajectory $\{\bar{y}_{\textup{\tiny{ref}},k}\}_{k=0}^{L-1}$ with $L=50$. We generate a persistently exciting input $\{u_k\}_{k=0}^{N-n-1}$ sampled from a uniform random distribution $U(-0.5,0.5)$ with $N=500$ and collect the corresponding output sequence $\{y_k\}_{k=0}^{N-1}$ from an open loop simulation. The output sequence is corrupted by additive noise sampled from $U(-0.025,0.025)$. Moreover, we use the following choice of basis functions $\Psi=\begin{bmatrix}u & u\xi_1 & u\xi_2 & \xi_1\xi_2 & u\xi_1^2 & u\xi_2^2\end{bmatrix}$ which contain the nonlinearity $v_k = u(k)\left(\xi_1^2(k)+2\right)$ in their span, i.e., \eqref{basis} is satisfied with $\text{a}^\top=\begin{bmatrix}2&0&0&0&1&0\end{bmatrix}$. Since the choice of basis functions is affine in $u$, the optimization problem in \eqref{flat_opt_om} is a quadratic program. Finally, we set the regularization parameter $\lambda=0.1$. Figure (1a) shows the estimated input by the proposed approach as well as the input computed by model-based output matching for validation purposes. In Figure (1b), we compare the true matched reference with the unknown system's response to the estimated input. It can be seen that the estimates are very close the model-based counterparts (considering the noise level), $\norm*{\bar{y}-\bar{y}_{\textup{\tiny{ref}}}}_2 = 0.2455,\,\norm*{\bar{u}-\bar{u}_{\textup{\tiny{ref}}}}_2 = 0.0708$. The estimation accuracy depends on the regularization parameter $\lambda$; small values do not mitigate the effect of the noise in $y$ while large values of $\lambda$ lead to poor performance, since $|\alpha|$ is small in that case.
	\end{example}
	\begin{figure}
		\begin{center}\includegraphics[scale=0.55]{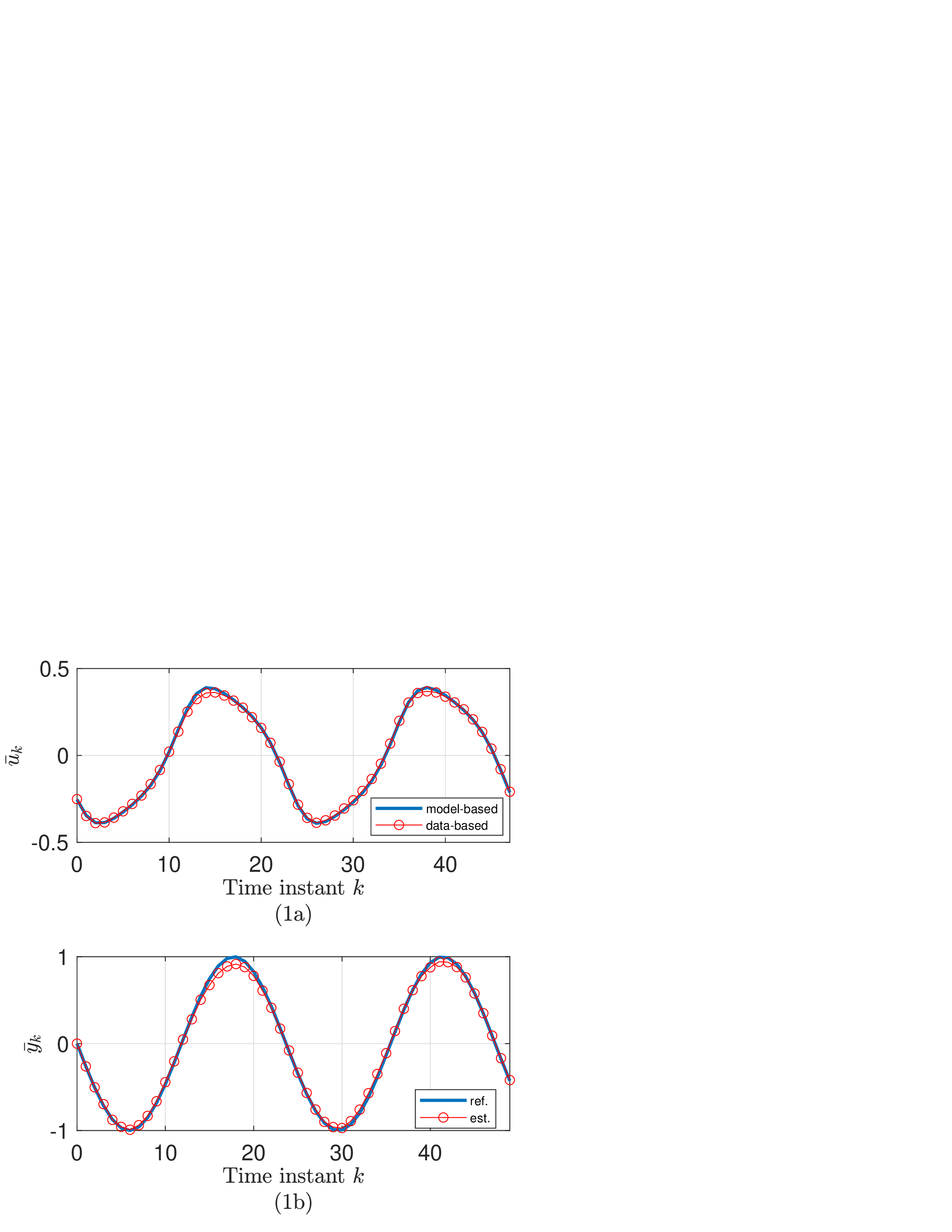}\end{center}
		\caption{(1a) Model-based and data-based estimated inputs and (1b) reference and estimated outputs, computed using the proposed data-driven output matching control approach.}
	\end{figure}
	\begin{example}
		Consider the following flat system
		\begin{align*}
			x_{1}(k+1) &= x_2(k),\\
			x_{2}(k+1) &= \frac{\sin(u)}{1+x_2^2(k)},\\
			y(k) &= x_1(k),
		\end{align*}
		which has a relative degree $d=2$. We wish to solve the simulation problem for a new input sequence $\{\bar{u}_k\}_{k=0}^{L-n-1}$ of length $L=50$, sampled from a uniform random distribution $U(-1,1)$. We generate an input sequence $\{u_k\}_{k=0}^{N-n-1}$ from $U(-1,1)$ with $N=750$ and collect the corresponding output sequence $\{y_k\}_{k=0}^{N-1}$ from an open loop simulation. The output sequence is corrupted by additive noise sampled from $U(-0.05,0.05)$. Finally, we use a bivariate Gaussian kernel as in \eqref{kernels} with $\sigma=1$ and solve the optimization problem in \eqref{sim_opt} in Matlab using CasADi \cite{casadi} with $\lambda=0.1$. Figure 2 shows the estimated output obtained from solving the data-based simulation problem along with the output of the system computed by model-based simulation. It can be seen from the results that the output is well approximated in this case by the proposed approach ($\norm*{\bar{y}-\bar{y}_{\textup{\tiny{true}}}}_2 = 0.3306$). In general, the estimation accuracy depends on several factors. The first is the length of the collected data $u,y$, and how the nonlinear system modes are excited by it. The second issue is whether a global minimum of the nonlinear optimization problem in \eqref{sim_opt} can (numerically) be found, since the dimension of $\alpha$ increases with $N$. Finally, a suitable tuning of the regularization parameter $\lambda$ and the kernel hyperparameter $\sigma$ is important. The problems posed by these factors are an open area of research that is currently being followed.
	\end{example}
	\begin{figure}
		\begin{center}\includegraphics[scale=0.55]{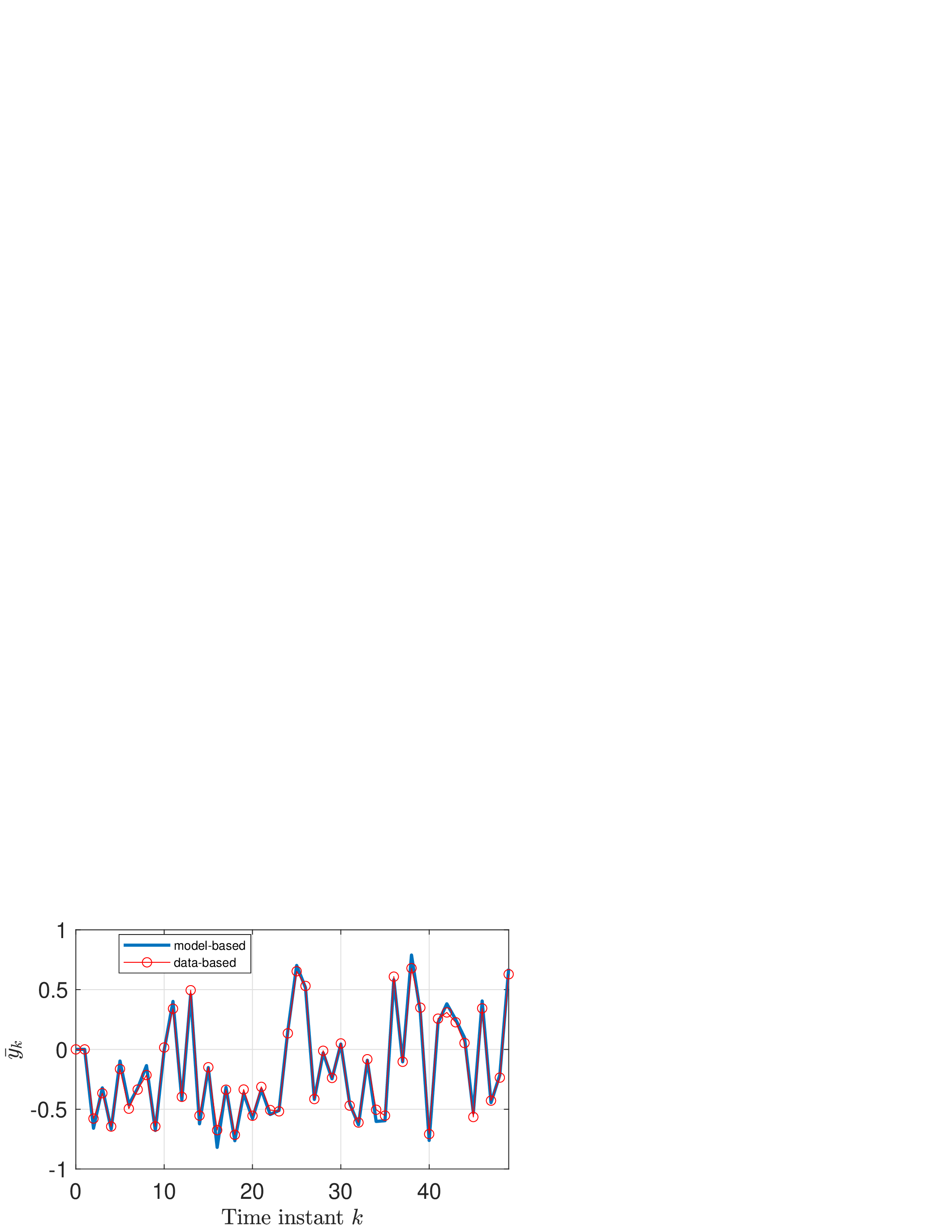}\end{center}
		\caption{Data-based estimated output computed using the proposed data-driven simulation approach with the Kernel method, and its model-based counterpart.}
	\end{figure}
	\section{Conclusions}
	In this paper, a data-based representation of SISO flat nonlinear systems was developed by leveraging the fact that they are exactly feedback linearizable. We have shown that a single, persistently exciting input-output trajectory spans the entire input-output trajectory space of the system, provided that the synthetic input is exactly expressed using a finite number of known basis functions. This representation was used to solve the simulation and output matching control problems in a purely data-based fashion. For a more practically relevant setting, we showed how to implicitly use infinitely many basis functions by using a kernel method. An interesting issue for future research is the quantification of the approximation error in data-based simulation and control when the nonlinearity is only approximated by the chosen basis functions.

\end{document}